\documentclass[11pt]{article}
\usepackage{tikz}
\usepackage{subcaption}
\usepackage{amsthm}
\usepackage{amssymb}
\usepackage{amsmath}
\usepackage{graphicx}
\usepackage[authoryear]{natbib}
\bibliographystyle{abbrvnat}

\newtheorem*{theorem*}{Theorem}
\newtheorem{theorem}{Theorem}
\newtheorem{lemma}{Lemma}
\newtheorem{cor}{Corollary}

\usetikzlibrary{positioning,chains,fit,shapes,calc}

\begin{document}
% \begin{frontmatter}
	
	\title{Saturating Stable Matchings}
	
	\author{Muhammad Maaz \\ \texttt{m.maaz@mail.utoronto.ca} \\ University of Toronto\footnote{1 King's College Circle, Toronto, ON, Canada M5S 1A8}}
	%\author{Muhammad Maaz}
	%\address{University of Toronto}
	%\ead{m.maaz@mail.utoronto.ca}
	\date{}

\maketitle

\begin{abstract}
I relate bipartite graph matchings to stable matchings. I prove a necessary and sufficient condition for the existence of a saturating stable matching, where every agent on one side is matched, for all possible preferences. I extend my analysis to perfect stable matchings, where every agent on both sides is matched.
\end{abstract}

% \end{frontmatter}

\section{Introduction}
A bipartite graph is a graph $G$ with two disjoint vertex sets, $X$ and $Y$, and an edge set $E$, such that there is no edge connecting two vertices in the same set. A common goal in bipartite graphs is to connect these two sets in a \emph{matching}, defined as a subset of $E$ such that no two edges share a vertex. If a vertex is the endpoint of an edge in $M$, we say it is \emph{matched}; otherwise it is unmatched. \citep{westgraphtheory}

\citet{hall1935} gives a necessary and sufficient condition for a bipartite graph to have an \emph{$X$-saturating matching}, where every vertex $x \in X$ is matched. 

When we imagine vertices as agents and allow them to have preferences over the other side, we have \citet{gs}'s classic stable marriage problem. A matching is stable if there does not exist a vertex pair $(x,y) \in X \times Y$ which are not matched together but prefer each other to their partner (note that their partner may be no one - i.e. they are unmatched) under that matching; we call this a \emph{blocking pair}. 

\citet{gs} prove there always exists a stable matching, giving a constructive proof by developing their famed deferred acceptance algorithm. Since their paper, the study of matchings has been taken on by economists in a rather different way than they are studied in graph theory, focusing on stability rather than combinatorics \citep{rothsoto}.

So, we have two major classical theorems looking at matchings from two different perspectives. \citet{hall1935} gives a theorem for the existence of a saturating matching (not necessarily stable), while \citet{gs} give a theorem for the existence of a stable matching (not necessarily saturating).

As matching applications proliferate, it may be desirable from a social perspective to have every agent matched (e.g., match everyone to a vaccine or match every medical student to a residency). Hence the question arises: can we find \emph{saturating stable} matchings?

\section{Main result}

Consider a bipartite graph $G = (X+Y, E)$. The neighborhood $N(x)$ of a vertex $x$ is the set of vertices adjacent to, or sharing an edge with, $x$, and the neighborhood of a set of vertices is the union of each vertex's neighborhood. The degree $deg(x)$ is the number of adjacent vertices, or $|N(x)|$.

Define $P$ to be a set of preference relations, the elements of which are each vertex's strict preference relation over the vertices in its neighborhood (it's \emph{acceptable} partners). If $x_1$ prefers $y_1$ to $y_2$, we write $x_1: y_1 \succ y_2$. Every vertex prefers an acceptable vertex to being unmatched, and prefers being unmatched to an unacceptable vertex. The set of all possible preference instances is $\mathbb{P}$. SM means stable matching below.

Define the following two conditions for some vertex $x$:
\begin{equation}
|N(N(x))| \leq |N(x)|
\end{equation}
\begin{equation}
\exists y \in N(x) (deg(y) = 1)
\end{equation}

\begin{lemma}
\label{xmatched}
If some $x \in X$ satisfies condition (1) or (2) (or both), then it is matched in all SMs in all preference instances.
\end{lemma}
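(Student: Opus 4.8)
The plan is to argue by contradiction and to reduce both cases to a single elementary principle. Suppose, for some preference instance $P \in \mathbb{P}$, there is a stable matching $M$ in which $x \in X$ is unmatched. (I take for granted that $\deg(x) \ge 1$, i.e. the graph has no isolated vertices; an isolated vertex can never be matched, so the statement is only meaningful — and condition (1) would otherwise hold vacuously, since $|N(N(x))| = |N(x)| = 0$ — when $N(x)\neq\emptyset$.) The principle is: if $x$ is unmatched under $M$ and some $y \in N(x)$ is also unmatched under $M$, then, since $x$ and $y$ are mutually acceptable, each strictly prefers the other to being unmatched, so $(x,y)$ is a blocking pair, contradicting the stability of $M$. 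Thus it suffices, under either hypothesis, to produce an unmatched neighbour of $x$.

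Under condition (2), this is immediate: if $y \in N(x)$ has $\deg(y) = 1$ then $x$ is the only vertex $y$ could ever be matched to, so $x$ unmatched forces $y$ unmatched, and the principle applies.

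Under condition (1), I would use a counting argument. Put $A = N(x) \subseteq Y$ and $B = N(N(x)) = N(A) \subseteq X$. Every edge meeting $A$ has its other endpoint in $B$, so under $M$ each matched vertex of $A$ is matched to a vertex of $B$, and distinct vertices of $A$ to distinct vertices of $B$ since $M$ is a matching. Moreover $x \in B$ (as $N(x)\neq\emptyset$), and no vertex of $A$ is matched to $x$ because $x$ is unmatched; hence the matched vertices of $A$ inject into $B\setminus\{x\}$. Therefore the number of matched vertices of $A$ is at most $|B|-1 = |N(N(x))|-1 \le |N(x)|-1 = |A|-1$, so some $y \in A = N(x)$ is unmatched, and the principle applies again.

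I do not anticipate a serious obstacle: the whole argument is short once one spots that the right target is an unmatched neighbour of $x$, exploited through the trivial blocking pair. The only points demanding care are the bookkeeping that $x$ itself belongs to $B$ and so must be removed from the count — this is precisely where the weak inequality in (1) is needed — and flagging the isolated-vertex degeneracy. It is worth noting that preferences play essentially no role in the argument (the blocking pair comes only from two mutually acceptable unmatched vertices), which is exactly why the conclusion is uniform over all of $\mathbb{P}$.
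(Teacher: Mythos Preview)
Your argument is correct and follows essentially the same route as the paper: a contradiction from an unmatched $x$, handling condition~(2) by the degree-one neighbour being forced unmatched, and condition~(1) by the pigeonhole count that the matched vertices of $N(x)$ inject into $N(N(x))\setminus\{x\}$. Your presentation is slightly cleaner in isolating the common ``unmatched neighbour gives a blocking pair'' principle up front and in flagging the isolated-vertex edge case, but the substance is the same.
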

\begin{proof}
 For contradiction's sake, assume that this $x$ is unmatched in some SM. By assumption, this $x$ satisfies $|N(N(x))| \leq |N(x)|$ or $\exists y \in N(x) (deg(y)=1)$, or both.
\begin{quote}
\begin{description}
\item[Case 1: $\exists y \in N(x) (deg(y)=1)$] Observe that this $y$ is unmatched, as it only has one acceptable vertex, and that is $x$, which is unmatched. This $x$ and $y$ form a blocking pair so this matching is unstable, hence contradiction.
\item[Case 2: $\nexists y \in N(x) (deg(y)=1)$] $x$ must satisfy $|N(N(x))| \leq |N(x)|$. Observe that $x$ is only unmatched if all of the vertices in $N(x)$ are already matched. Since vertices $N(x)$ can only be matched to vertices in $N(N(x))$, this means that $|N(N(x))|-1$ (the number of vertices in $N(N(x))$ excluding $x$ itself) vertices are matched to $|N(x)|$ vertices. By the pigeonhole principle, this is a contradiction.
\end{description}
\end{quote}
This exhausts all the cases.
\end{proof}

\begin{lemma}
\label{xunmatched}
If some $x \in X$ does not satisfy condition (1) nor (2), then there exists some preference instance $P \in \mathbb{P}$ under which it is unmatched in every SM.
\end{lemma}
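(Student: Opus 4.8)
\emph{Proof proposal.} The plan is to produce a single preference instance $P$ in which one cleverly chosen matching is forced into \emph{every} stable matching and which leaves $x$ uncovered. The matching I want is an $M^*$ that saturates $N(x)$ without using $x$: an assignment of each $y\in N(x)$ to a distinct partner $M^*(y)\in N(y)\setminus\{x\}$. Given such an $M^*$, define $P$ by making every edge of $M^*$ a pair of mutual first choices --- each $y\in N(x)$ ranks $M^*(y)$ at the top of its list, and each target vertex $M^*(y)$ ranks $y$ at the top of its list (no conflict arises, since the $M^*(y)$ are distinct and, by bipartiteness, disjoint from $N(x)$) --- and order all remaining vertices, and $x$ itself, arbitrarily. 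Every stable matching $M$ under $P$ then contains $M^*$: if $M$ omitted an edge $\{y,M^*(y)\}$, then $y$ strictly prefers $M^*(y)$ to its $M$-partner (or to being unmatched), and likewise $M^*(y)$ strictly prefers $y$ to its own $M$-situation, so $\{y,M^*(y)\}$ blocks $M$, a contradiction. Hence in $M$ every vertex of $N(x)$ is matched to someone other than $x$; since $N(x)$ is the entire acceptable set of $x$, $x$ is unmatched in $M$. As $M$ was an arbitrary stable matching under $P$, this gives the lemma --- once $M^*$ is shown to exist, and this route sidesteps any appeal to the Rural Hospitals theorem.

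So the whole argument reduces to one combinatorial fact: the existence of a matching of $G-x$ saturating $N(x)$. By Hall's theorem this is equivalent to $|N(S)\setminus\{x\}|\ge|S|$ for every nonempty $S\subseteq N(x)$, i.e.\ to $|N(S)|\ge|S|+1$ (note $x\in N(S)$ because $\emptyset\neq S\subseteq N(x)$). The failure of condition (1) is exactly this inequality at $S=N(x)$, and the failure of condition (2) supplies it at every singleton, since then $\deg(y)\ge2$ for all $y\in N(x)$.

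The remaining cases of Hall's condition --- the \emph{intermediate} subsets $S$ --- are where I expect the real difficulty, and I am doubtful they follow from the two extreme cases alone. For instance, with $N(x)=\{y_1,y_2,y_3\}$, $N(y_1)=N(y_2)=\{x,a\}$, and $N(y_3)=\{x,b,c\}$, both (1) and (2) fail, yet $S=\{y_1,y_2\}$ already violates Hall's condition, so no such $M^*$ exists (and indeed $x$ is then matched in every stable matching of every instance). I would therefore aim to prove the statement in the form the construction above actually delivers --- with (1) and (2) replaced by the full Hall-type condition ``$|N(S)|\ge|S|+1$ for every nonempty $S\subseteq N(x)$'' --- or else look for an extra structural hypothesis on $G$ ruling out configurations like the one above. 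Should it turn out that failing (1) and (2) does force Hall's condition, the proof would be a counting/inductive interpolation between $|S|=1$ and $S=N(x)$ using both the degree bound and the cap on $|N(N(x))|$, and that induction is the step I would expect to cost the most.
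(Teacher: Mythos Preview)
Your doubts are well-founded, and in fact your example does more than block your own Hall-based route: it refutes the lemma as stated. With $N(x)=\{y_1,y_2,y_3\}$, $N(y_1)=N(y_2)=\{x,a\}$, $N(y_3)=\{x,b,c\}$, the vertex $x$ fails both (1) (since $|N(N(x))|=4>3=|N(x)|$) and (2) (all $y_i$ have degree $\ge 2$), yet $x$ is matched in \emph{every} stable matching of \emph{every} preference instance: $a$ is matched to at most one of $y_1,y_2$, so some $y_i$ with $N(y_i)=\{x,a\}$ is not matched to $a$; if that $y_i$ were unmatched together with $x$, the pair $(x,y_i)$ would block. Hence no preference instance leaves $x$ unmatched in all stable matchings, contradicting the lemma.

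The paper's own argument takes a different tack from yours --- it does not build an explicit matching $M^{*}$ but instead fixes preferences so that every $y\in N(x)$ ranks $x$ last and every competitor in $N(N(x))\setminus\{x\}$ ranks $N(x)$ above the rest --- and then asserts that if $x$ were matched, \emph{all} of $N(N(x))\setminus\{x\}$ would also have to be matched into $N(x)$, forcing a pigeonhole contradiction. The gap is precisely in that assertion: a competitor $v$ need not form a blocking pair with any $y\in N(x)$, because each such $y$ may already be matched to a partner it prefers to $v$. In your example, $c$ is adjacent only to $y_3$, and if $y_3$ is matched to $b$ with $y_3: b\succ c\succ x$, then $(c,y_3)$ does not block, so $c$ sits harmlessly unmatched while $x$ is matched to $y_2$. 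Your proposed repair is the right one: the correct hypothesis is the full Hall condition $|N(S)\setminus\{x\}|\ge |S|$ for every nonempty $S\subseteq N(x)$, under which your $M^{*}$ exists and your mutual-top-choice construction goes through verbatim; conversely, if Hall fails for some $S$, the pigeonhole argument above (applied to that $S$) shows $x$ is matched in every stable matching of every instance, so the Hall condition is both necessary and sufficient.
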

\begin{proof}
Observe that this $x$ is such that $|N(N(x))| > |N(x)|$ and $(\forall y \in N(x))(deg(y)>1)$.\footnotemark

\footnotetext{Note that the negation of $deg(y)=1$ is $deg(y)>1$, because it is in some vertex's neighborhood, and so $deg(y)>0$.}

By the assumption, there exists some $x \in X$ that satisfies $|N(N(x))| > |N(x)|$ and $\forall y \in N(x) (deg(y)>1)$. It suffices to show there exists a $P$ which leaves $x$ unmatched.

Consider the following preference instance $P$:
\begin{itemize}
\item $\forall a \in N(x): i \succ x$, where $i \in N(N(x))-x$
\item $\forall b \in N(N(x))-x: j \succ k$, where $j \in N(x)$ and $k \in N(b)-N(x)$
\item all other preference relations are allowed to vary
\end{itemize}

Roughly speaking, $P$ states that all of $x$'s options prefer all their other acceptable vertices to $x$ itself, and that all of $x$'s competitors prefer to be matched to a vertex in $N(x)$ over any other vertex that they find acceptable.

Now we can show that $x$ is unmatched in all SMs. Assume for contradiction's sake that there is some SM $M$ in which $x$ is matched to some vertex in $N(x)$. Observe that $\forall v \in N(N(x))-x$ are matched to a vertex in $N(x)$. If some $v$ wasn't, then, due to the stated preferences, and the fact that all vertices in $N(x)$ have $deg > 1$ by the second part of the assumption, said $v$ would form a blocking pair with some vertex in $N(x)$.

However, this means that $|N(N(x))|$ vertices (all of $x$'s competitors plus $x$ itself) are matched to $|N(x)|$ vertices. But, since $|N(N(x))|>|N(x)|$, by the pigeonhole principle, this is a contradiction.

Therefore, $x$ is unmatched in all SMs under $P$. 
\end{proof}

Now the main result.

\begin{theorem}
\label{main}
Every SM is $X$-saturating for all preference instances if and only if for all $x \in X$, conditions (1) or (2) (or both) hold.
\end{theorem}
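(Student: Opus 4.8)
The plan is to derive the theorem directly from Lemmas \ref{xmatched} and \ref{xunmatched}, which between them characterize exactly which vertices $x \in X$ are forced to be matched in every stable matching. Since the statement is a biconditional, I would prove each direction separately, handling the reverse direction by contraposition.

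For the ``if'' direction, suppose every $x \in X$ satisfies condition (1) or condition (2) (or both). Fix an arbitrary preference instance $P \in \mathbb{P}$ and an arbitrary stable matching $M$ under $P$. Applying Lemma \ref{xmatched} to each $x \in X$ shows that $x$ is matched in $M$. Since $x$, $P$, and $M$ were arbitrary, every stable matching is $X$-saturating for every preference instance.

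For the ``only if'' direction, I argue the contrapositive: suppose it is not the case that every $x \in X$ satisfies (1) or (2). Then there is some $x_0 \in X$ satisfying neither condition. Lemma \ref{xunmatched} then supplies a preference instance $P \in \mathbb{P}$ under which $x_0$ is unmatched in every stable matching. By \citet{gs}, at least one stable matching exists under $P$, and any such matching fails to saturate $X$ since it leaves $x_0$ unmatched; hence it is not true that every stable matching is $X$-saturating for all preference instances.

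I do not anticipate a substantive obstacle, since the two lemmas do all of the combinatorial work. The only point requiring care is the logical bookkeeping of the nested quantifiers in ``for all preference instances, every stable matching is $X$-saturating'' — in particular negating it correctly and invoking the Gale--Shapley existence result so that the failing instance genuinely possesses a stable matching witnessing the violation (rather than the claim being vacuously true).
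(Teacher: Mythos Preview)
Your proposal is correct and follows essentially the same approach as the paper: invoke Lemma~\ref{xmatched} for the ``if'' direction and the contrapositive via Lemma~\ref{xunmatched} for the ``only if'' direction. Your explicit appeal to \citet{gs} to ensure the failing preference instance actually has a stable matching is a nice touch that the paper leaves implicit.
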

\begin{proof}
First, the if direction. For all $x \in X$, conditions (1) or (2) or both hold. By Lemma \ref{xmatched}, the desired statement holds.

Next, the only if direction. The contrapositive, where some $x$ satisfies neither condition, holds by Lemma \ref{xunmatched}.

\end{proof}

\subsection{Equivalent statements}

For the market designer, it is not necessarily important if \emph{all} SMs are saturating, but if at least one \emph{exists}. Further, in a real-world matching market that uses the Gale-Shapley algorithm, a very particular SM is yielded, which is either the $X$-optimal or $Y$-optimal SM (depending on the algorithm configuration)\footnotemark, so the designer may be particularly interested in whether the outputted SM is saturating.

\footnotetext{The $X$-optimal SM is such that $\forall x \in X$ prefers it to every other SM. The $X$-pessimal SM is such that $\forall x \in X$ prefers every other SM to it. The $X$-pessimal SM is also the $Y$-optimal SM \citep{rothsoto}.}

Interestingly, these are really all the same question, thanks to the following.

\begin{theorem*}[\citet{mcvitiewilson1970}]
	``In a marriage problem of n men and k women if any person is unmarried in one stable marriage solution he or she will be unmarried in all the stable solutions."
\end{theorem*}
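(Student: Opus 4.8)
The plan is to deduce the statement from facts about the set of stable matchings already invoked in the footnote above: the Gale-Shapley algorithm yields the $X$-optimal stable matching, and this matching is simultaneously $Y$-pessimal (it is the $Y$-optimal one with the roles reversed). Fix an arbitrary stable matching $M$ and let $M_X$ denote the $X$-optimal, hence $Y$-pessimal, stable matching. First I would show that every $x \in X$ matched in $M$ is also matched in $M_X$: if $x$ has a partner under $M$ then that partner is acceptable, so $x$ strictly prefers its situation under $M$ to being unmatched; since $x$ weakly prefers its $M_X$-assignment to its $M$-assignment, $x$ cannot be unmatched under $M_X$. Symmetrically, $Y$-pessimality of $M_X$ shows that every $y \in Y$ matched under $M_X$ is matched under $M$, because its $M$-assignment is weakly better than its $M_X$-assignment.

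To conclude I would count. In any matching the number of matched vertices of $X$ equals the number of matched vertices of $Y$. Hence the number of matched $X$-vertices under $M$ is at most that under $M_X$, which equals the number of matched $Y$-vertices under $M_X$, which is at most the number of matched $Y$-vertices under $M$, which in turn equals the number of matched $X$-vertices under $M$. All four quantities therefore coincide, so the two set inclusions from the first step are in fact equalities and $M$ matches exactly the same vertices of $X$ (and of $Y$) as $M_X$. As $M$ was arbitrary, every stable matching has the same set of matched agents as $M_X$, which is exactly the assertion that an agent unmatched in one stable matching is unmatched in all of them.

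\textbf{A self-contained alternative.} If one prefers not to rely on the optimal/pessimal facts, I would argue directly from the symmetric difference $M \mathbin{\triangle} M'$ of two stable matchings: each component is an alternating path or cycle, since every vertex meets at most one edge of each matching. Suppose some $x_0 \in X$ is matched under $M$ but not under $M'$; then $x_0$ is the endpoint of an alternating path $x_0 - y_1 - x_1 - y_2 - \cdots$ beginning with an $M$-edge. Walking along this path and alternately invoking stability of $M'$ and of $M$, one shows at each step that the next vertex must itself be matched (otherwise the edge just traversed, together with that vertex, would be a blocking pair) and must strictly prefer its forward neighbor along the path; this produces an infinite sequence of distinct vertices, contradicting finiteness of $G$. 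Symmetric versions of the argument cover $Y$ and the reverse direction.

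\textbf{Main obstacle.} The only fiddly parts are bookkeeping: pinning down the weak-versus-strict preference conventions so that ``being matched beats being unmatched'' is applied correctly on both sides of the counting argument, and, in the combinatorial route, checking that the vertices generated along the path are pairwise distinct --- which holds because $x_0$ has degree one in $M \cup M'$, so its component is a simple path that cannot close into a cycle. There is no real conceptual difficulty; the substance is entirely in the alternating-chain and cardinality observations.
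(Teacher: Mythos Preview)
The paper does not prove this theorem; it is quoted verbatim as a result of \citet{mcvitiewilson1970} and then immediately used as a black box to obtain Lemma~\ref{equiv}. So there is no ``paper's own proof'' to compare against.

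That said, both of your routes are correct and standard. The first argument is clean: $X$-optimality gives the inclusion of matched $X$-vertices from $M$ into $M_X$, $Y$-pessimality gives the inclusion of matched $Y$-vertices from $M_X$ into $M$, and the cardinality chain forces both inclusions to be equalities. The only caveat is that you are assuming $X$-optimality/$Y$-pessimality of the Gale--Shapley output as a prior fact; the paper does cite this in its footnote, so within the paper's framework that is legitimate. The alternating-path argument is the more self-contained one and is essentially the original McVitie--Wilson proof: the inductive invariant that each vertex along the path strictly prefers its forward neighbour is exactly what prevents termination, and since $x_0$ has degree one in $M\cup M'$ the component is a simple path rather than a cycle, so finiteness of $G$ gives the contradiction. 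Your phrasing ``infinite sequence of distinct vertices'' is slightly loose---more precisely, the path cannot terminate yet must---but the substance is right.
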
 

So, if a single SM is $X$-saturating (no one is ``unmarried"), then any other SM is also $X$-saturating (including the $X$-optimal one, and the $X$-pessimal one), and indeed all of them. 

\begin{lemma}
	\label{equiv}
	For a given preference instance $P$, an arbitrary SM is $X$-saturating if and only if all SMs are $X$-saturating.
\end{lemma}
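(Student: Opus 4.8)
The plan is to derive Lemma~\ref{equiv} directly from the McVitie--Wilson theorem quoted immediately above, which states that the set of unmatched agents is invariant across all stable matchings for a fixed preference instance $P$. This is essentially a restatement in the language of $X$-saturation, so the proof should be short.

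First I would fix an arbitrary preference instance $P$ and prove the nontrivial direction: if some SM $M$ is $X$-saturating, then every SM $M'$ is $X$-saturating. Suppose not; then there is an SM $M'$ and a vertex $x \in X$ that is unmatched in $M'$. By the McVitie--Wilson theorem, any agent unmatched in one SM is unmatched in all SMs, so $x$ is unmatched in $M$ as well. But this contradicts the assumption that $M$ is $X$-saturating (every $x \in X$ is matched in $M$). Hence every SM is $X$-saturating. The converse direction is immediate: if all SMs are $X$-saturating, then in particular the arbitrary one is, since at least one SM exists by \citet{gs}. I should note that existence of some SM is needed here just so the phrase ``an arbitrary SM'' is not vacuous.

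I do not expect a genuine obstacle, since the content is fully supplied by the quoted theorem; the only thing to be careful about is bookkeeping — making sure the statement ``$x$ unmatched'' is correctly translated between the ``unmarried'' phrasing of McVitie--Wilson and the ``$X$-saturating'' phrasing here, and making sure both directions of the biconditional are addressed (including the trivial one). If one wanted a self-contained argument not invoking McVitie--Wilson, the alternative would be a Rural Hospitals Theorem-style argument (the set of matched agents, and indeed the degrees, are the same in every SM), but given that the excerpt explicitly quotes McVitie--Wilson, citing it is the intended and cleanest route.
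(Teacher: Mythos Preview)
Your proposal is correct and follows exactly the paper's approach: the paper's own proof is simply ``Follows from \citet{mcvitiewilson1970},'' and your argument merely unpacks that citation into the obvious contrapositive for the nontrivial direction and notes the trivial converse.
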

\begin{proof}
	Follows from \citet{mcvitiewilson1970}.
\end{proof}

\begin{cor}
	For a given preference instance $P$, let the set of all SMs be $\mathbb{M}$. Then, for all preference instances, an arbitrary $M \in \mathbb{M}$ is $X$-saturating if and only if for all $x \in X$, at least one of conditions (1) and (2) hold.
\end{cor}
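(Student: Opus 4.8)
The plan is to obtain the Corollary by chaining Lemma~\ref{equiv} with Theorem~\ref{main}; no new combinatorial argument is needed. First I would fix an arbitrary preference instance $P$ and recall that by \citet{gs} the set $\mathbb{M}$ of stable matchings is nonempty, so the phrase ``an arbitrary $M \in \mathbb{M}$'' is meaningful. For this fixed $P$, Lemma~\ref{equiv} says precisely that ``an arbitrary $M \in \mathbb{M}$ is $X$-saturating'' is equivalent to ``every $M \in \mathbb{M}$ is $X$-saturating.'' Since that biconditional holds for each $P$ individually, I can place it under the universal quantifier over preference instances to conclude that ``for all $P$, an arbitrary $M \in \mathbb{M}$ is $X$-saturating'' is equivalent to ``for all $P$, every SM is $X$-saturating.''

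Next I would simply invoke Theorem~\ref{main}, whose statement is exactly that ``every SM is $X$-saturating for all preference instances'' holds if and only if for all $x \in X$ at least one of conditions (1), (2) holds. Composing this equivalence with the one obtained from Lemma~\ref{equiv} gives the desired statement: for all preference instances, an arbitrary $M \in \mathbb{M}$ is $X$-saturating if and only if every $x \in X$ satisfies (1) or (2).

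The only point that requires any care is the order of quantifiers. Lemma~\ref{equiv} is a per-instance statement, so the interchange of ``arbitrary'' and ``all'' must be carried out \emph{inside} the universal quantifier over $P$ before the result is lined up with the hypothesis of Theorem~\ref{main}, which already quantifies over all preference instances. Beyond this bit of bookkeeping there is no obstacle — the Corollary is a direct logical consequence of the two preceding results, and the write-up should be only a few lines.
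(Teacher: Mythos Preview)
Your proposal is correct and follows exactly the paper's approach: the paper's own proof is the single line ``Follows from Theorem~\ref{main} and Lemma~\ref{equiv},'' and your write-up simply spells out that chaining in more detail, including the quantifier bookkeeping.
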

\begin{proof}
	Follows from Theorem \ref{main} and Lemma \ref{equiv}.
\end{proof}

Thus, the biconditional in Theorem \ref{main} is the same for the existence of an $X$-saturating matching, the $X$-optimal SM being $X$-saturating, or for \emph{any} arbitrary SM the market designer is interested in. This equivalence holds for subsequent results in this paper, per Lemma \ref{equiv}.

\section{Applications}

\subsection{Demonstrative examples}

\begin{figure}
	\centering
	\caption{Examples of bipartite graphs}
	\label{ex}
	     \begin{subfigure}[b]{0.20\textwidth}
          \centering
          \resizebox{\linewidth}{!}{
	\begin{tikzpicture}[every node/.style={circle}]
	
	\begin{scope}[start chain=going below,node distance=7mm]
	\node[draw, on chain] (x1) [label=left: $x_1$] {};
	\node[draw, on chain] (x2) [label=left: $x_2$] {};
	\end{scope}
	
	\begin{scope}[xshift=2cm, start chain=going below,node distance=7mm]
	\node[draw, on chain] (y1) [label=right: $y_1$] {};
	\node[draw, on chain] (y2) [label=right: $y_2$] {};
	\end{scope}
	
	\draw (x1) -- (y1);
	\draw (x2) -- (y2);
	\draw (x1) -- (y2);
	
	\node [above=of x1,label=below:$X$] {};
	\node [above=of y1,label=below:$Y$] {};
	
	\end{tikzpicture}
	}  
          \caption{}
     \end{subfigure} %
     \begin{subfigure}[b]{0.20\textwidth}
          \centering
          \resizebox{\linewidth}{!}{
          
          \begin{tikzpicture}[every node/.style={circle}]
	
	\begin{scope}[start chain=going below,node distance=7mm]
	\node[draw, on chain] (x1) [label=left: $x_1$] {};
	\node[draw, on chain] (x2) [label=left: $x_2$] {};
	\end{scope}
	
	\begin{scope}[xshift=2cm, start chain=going below,node distance=7mm]
	\node[draw, on chain] (y1) [label=right: $y_1$] {};
	\node[draw, on chain] (y2) [label=right: $y_2$] {};
	\node[draw, on chain] (y3) [label=right: $y_3$] {};
	\end{scope}
	
	\draw (x1) -- (y1);
	\draw (x2) -- (y2);
	\draw (x1) -- (y2);
	\draw (x2) -- (y3);
	
	\node [above=of x1,label=below:$X$] {};
	\node [above=of y1,label=below:$Y$] {};
	
	\end{tikzpicture}}  
          \caption{}
     \end{subfigure}
     \begin{subfigure}[b]{0.20\textwidth}
          \centering
          \resizebox{\linewidth}{!}{
          
          \begin{tikzpicture}[every node/.style={circle}]
	
	\begin{scope}[start chain=going below,node distance=7mm]
	\node[draw, on chain] (x1) [label=left: $x_1$] {};
	\node[draw, on chain] (x2) [label=left: $x_2$] {};
	\node[draw, on chain] (x3) [label=left: $x_3$] {};
	\end{scope}
	
	\begin{scope}[xshift=2cm, start chain=going below,node distance=7mm]
	\node[draw, on chain] (y1) [label=right: $y_1$] {};
	\node[draw, on chain] (y2) [label=right: $y_2$] {};
	\node[draw, on chain] (y3) [label=right: $y_3$] {};
	\node[draw, on chain] (y4) [label=right: $y_4$] {};
	\end{scope}
	
	\draw (x1) -- (y1);
	\draw (x2) -- (y2);
	\draw (x1) -- (y2);
	\draw (x3) -- (y2);
	\draw (x3) -- (y3);
	\draw (x1) -- (y4);
	\draw (x2) -- (y4);
	\draw (x2) -- (y1);
	
	\node [above=of x1,label=below:$X$] {};
	\node [above=of y1,label=below:$Y$] {};
	
	\end{tikzpicture}}  
          \caption{}
     \end{subfigure}
\end{figure}
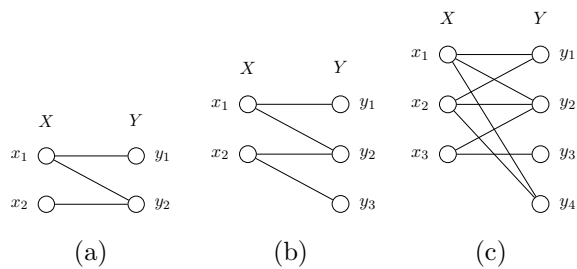

Looking at Figure \ref{ex}a, conditions (1) and (2) are violated for $x_2$, since $|N(N(x_2))|=2 > |N(x_2)|=1$, so $X$-saturating SMs do not exist for all preference instances.

In Figure \ref{ex}b, which simply added one vertex to \ref{ex}a, both conditions now hold for $x_2$.

Lastly, in Figure \ref{ex}c, condition (1) holds for $x_1$ and $x_2$. While $x_3$ violates condition (1), it does satisfy (2), as $deg(y_3)=1$ and $y_3 \in N(x_3)$.

\subsection{Perfect matchings}

\citet{gs} considered $|X|=|Y|=n$ and every $x \in X$ is acceptable to $y \in Y$ (and vice versa) (i.e. preferences are complete). In graph theory, this is a \emph{complete bipartite graph}. \citet{gs} say that no vertex is unmatched after the execution of their algorithm. In graph-theoretic terms, for all preference instances, the SM given by their algorithm is \emph{perfect}, meaning that every vertex is matched. 

This is actually implied by Theorem \ref{main}, as $\forall x \in X$ and $\forall y \in Y$ satisfy condition (1): $\forall x \in X$, $|N(x)|=n$ and $|N(N(x))|=n$ (due to being a complete bipartite graph) so condition (1) is fulfilled, and similarly $\forall y \in Y$. Therefore, by Theorem \ref{main}, for all preference instances, every SM is $X$-saturating and $Y$-saturating, and hence perfect.

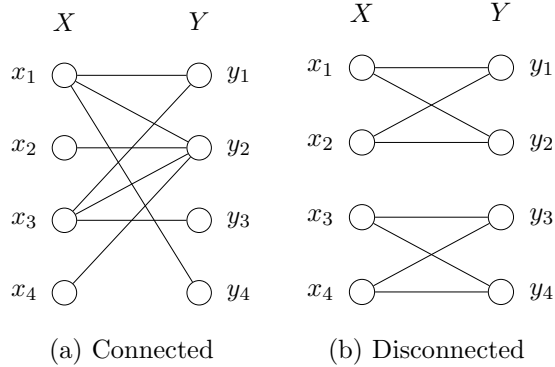
\begin{figure}
\centering
\caption{A connected and a disconnected matching market}
\label{pcs}
     \begin{subfigure}[b]{0.30\textwidth}
          \centering
          \resizebox{\linewidth}{!}{
	\begin{tikzpicture}[every node/.style={circle}]
	
	\begin{scope}[start chain=going below,node distance=7mm]
	\node[draw, on chain] (x1) [label=left: $x_1$] {};
	\node[draw, on chain] (x2) [label=left: $x_2$] {};
	\node[draw, on chain] (x3) [label=left: $x_3$] {};
	\node[draw, on chain] (x4) [label=left: $x_4$] {};
	\end{scope}
	
	\begin{scope}[xshift=2cm, start chain=going below,node distance=7mm]
	\node[draw, on chain] (y1) [label=right: $y_1$] {};
	\node[draw, on chain] (y2) [label=right: $y_2$] {};
	\node[draw, on chain] (y3) [label=right: $y_3$] {};
	\node[draw, on chain] (y4) [label=right: $y_4$] {};
	\end{scope}
	
	\draw (x1) -- (y1);
	\draw (x2) -- (y2);
	\draw (x1) -- (y2);
	\draw (x3) -- (y1);
	\draw (x4) -- (y2);
	\draw (x3) -- (y2);
	\draw (x1) -- (y4);
	\draw (x3) -- (y3);
	
	\node [above=of x1,label=below:$X$] {};
	\node [above=of y1,label=below:$Y$] {};
	
	\end{tikzpicture}
	}  
          \caption{Connected}
     \end{subfigure} %
     \begin{subfigure}[b]{0.30\textwidth}
          \centering
          \resizebox{\linewidth}{!}{
          
          \begin{tikzpicture}[every node/.style={circle}]
	
	\begin{scope}[start chain=going below,node distance=7mm]
	\node[draw, on chain] (x1) [label=left: $x_1$] {};
	\node[draw, on chain] (x2) [label=left: $x_2$] {};
	\node[draw, on chain] (x3) [label=left: $x_3$] {};
	\node[draw, on chain] (x4) [label=left: $x_4$] {};
	\end{scope}
	
	\begin{scope}[xshift=2cm, start chain=going below,node distance=7mm]
	\node[draw, on chain] (y1) [label=right: $y_1$] {};
	\node[draw, on chain] (y2) [label=right: $y_2$] {};
	\node[draw, on chain] (y3) [label=right: $y_3$] {};
	\node[draw, on chain] (y4) [label=right: $y_4$] {};
	\end{scope}
	
	\draw (x1) -- (y1);
	\draw (x2) -- (y2);
	\draw (x1) -- (y2);
	\draw (x2) -- (y1);
	
	\draw (x3) -- (y3);
	\draw (x3) -- (y4);
	\draw (x4) -- (y3);
	\draw (x4) -- (y4);
	
	\node [above=of x1,label=below:$X$] {};
	\node [above=of y1,label=below:$Y$] {};
	
	\end{tikzpicture}}  
          \caption{Disconnected}
     \end{subfigure}
\end{figure}

In fact, if our matching market is in one ``piece'', then the only way to obtain a perfect matching if $|X|=|Y|=n$ is if preferences are complete. In Figure \ref{pcs}b, we can visually see that there are two different pieces - called \emph{components} in graph theory. Even though preferences are incomplete (e.g., $x_3$ does not find $y_1$ acceptable), clearly a perfect SM will always exist. 

I first consider graphs with only one component, like Figure \ref{pcs}a. A graph with only one component is called a \emph{connected graph}, defined by a path existing between any two vertices \citep{westgraphtheory}.

\begin{theorem}
\label{perfmatch}
Given a connected bipartite graph $G=(X+Y,E)$ with $|X|=|Y|=n$, all SMs are perfect for all preference instances if and only if $G$ is a complete bipartite graph.
\end{theorem}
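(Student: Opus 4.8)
The plan is to prove the two implications separately; the forward direction is an immediate consequence of Theorem~\ref{main}, and essentially all the work is in the converse. For the \emph{if} direction: if $G=K_{n,n}$ then (as already observed) every $x\in X$ satisfies condition (1), since $N(x)=Y$ and $N(N(x))=X$ are both of size $n$, and symmetrically every $y\in Y$ does; so by Theorem~\ref{main} applied to $X$ and, by symmetry, to $Y$, every SM is both $X$-saturating and $Y$-saturating for every preference instance, and with $|X|=|Y|=n$ such a matching is perfect.

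For the \emph{only if} direction, assume $G$ is connected, $|X|=|Y|=n$, and all SMs are perfect for all preference instances. If $n=1$, connectedness already forces $G=K_{1,1}$, so assume $n\ge 2$; then every vertex has degree at least $1$. Since every SM being perfect makes every SM both $X$-saturating and $Y$-saturating for all preference instances, Theorem~\ref{main} (and its symmetric version) says every vertex of $G$ satisfies condition (1) or (2). The first step I would take is to rule out vertices of degree $1$: if $\deg(x)=1$ with $N(x)=\{y\}$, then $\deg(y)\ge 2$ because $\{x,y\}$ cannot be an entire connected component when $n\ge2$, so $x$ fails condition (2) and must satisfy (1), which forces $N(y)=N(N(x))=\{x\}$ and hence $\deg(y)=1$, a contradiction; the same argument applies on the $Y$ side. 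Hence the minimum degree $d$ of $G$ is at least $2$, condition (2) now fails at every vertex, and so condition (1) holds everywhere: $|N(N(v))|\le\deg(v)$ for all $v\in X\cup Y$.

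The heart of the proof is a two-step ``closing-off'' of a minimum-degree vertex. Pick $x\in X$ with $\deg(x)=d$ and set $B=N(x)$, $A=N(N(x))=N(B)$. Condition (1) at $x$ gives $|A|\le d$; on the other hand each $y\in B$ has $N(y)\subseteq A$ and $\deg(y)\ge d$, so a counting argument forces $|A|=d$, every $y\in B$ to have degree $d$ with $N(y)=A$, and therefore $A\cup B$ to induce a copy of $K_{d,d}$ — in particular $B\subseteq N(a)$ for every $a\in A$. Now I would repeat the identical reasoning at any vertex $y_0\in B$ (which also has degree $d$): condition (1) at $y_0$ gives $|N(A)|\le d$, while each $a\in A$ has $N(a)\subseteq N(A)$ and $\deg(a)\ge d$, forcing $N(a)=N(A)$ for all $a\in A$; combined with $B\subseteq N(a)$ and $|B|=d$ this pins down $N(a)=B$ for every $a\in A$. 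Thus no edge leaves $A\cup B$, so $A\cup B$ is a union of connected components of $G$; since $G$ is connected, $A\cup B=X\cup Y$, whence $A=X$, $B=Y$, $d=n$, and $G=K_{n,n}$.

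The routine parts are the pigeonhole/counting steps carried out inside each invocation of condition (1). The one step that needs care — and what I expect to be the main obstacle — is recognizing that a single minimum-degree vertex only produces a $K_{d,d}$ block that may still have edges escaping from its $X$-side, so condition (1) must be applied a \emph{second} time at a vertex on the $Y$-side of that block in order to seal it off; the preliminary elimination of degree-$1$ vertices (so that condition (2) never interferes) is easy but is also genuinely needed for this bootstrapping to go through.
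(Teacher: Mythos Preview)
Your proof is correct, but it follows a genuinely different route from the paper's. The paper argues the ``only if'' direction by induction on $n$: it views $G_{k+1}$ as a complete bipartite $G_k$ with one new vertex adjoined to each side, observes that any old vertex $v\in Y$ adjacent to the new $x$ already has $|N(v)|=k+1$, and then invokes the contrapositive of Lemma~\ref{xunmatched} to force $|N(x)|\ge |N(N(x))|=k+1$, so the new vertices must be adjacent to everything. You instead give a direct extremal argument: first eliminate degree-$1$ vertices so that condition~(2) is never available, then apply condition~(1) at a minimum-degree vertex $x$ to pin down a $K_{d,d}$ block on $A\cup B$, and apply condition~(1) a second time on the $Y$-side to show no edges leave this block, whence connectedness gives $d=n$. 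Your approach is entirely self-contained once Theorem~\ref{main} is in hand and sidesteps the delicate point in the paper's induction of justifying why the graph obtained by deleting the two ``new'' vertices is itself connected and still satisfies the perfect-SM hypothesis; the paper's argument, on the other hand, is shorter and makes the role of Lemma~\ref{xunmatched} (rather than the full Theorem~\ref{main}) more explicit.
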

\begin{proof}
First, the if direction. If $G$ is a complete bipartite graph, then $\forall x \in X$ and $\forall y \in Y$ satisfy condition (1), as discussed above, so by Theorem \ref{main}, all SMs in all preference instances are $X$-saturating and $Y$-saturating, meaning perfect.

In the other direction, proceed by induction on $n$. For the base case $n=1$, there is one vertex each in $X$ and $Y$, and they have an edge as $G$ is connected. Clearly, this is a complete bipartite graph.

Next, assume that for some $n=k$, if all SMs are perfect for all preference instances, then $G_k$ is a complete bipartite graph.

We wish to show that for $n=k+1$, $G_{k+1}$ is also a complete bipartite graph. $G_{k+1}$ is formed by adding a vertex to each $X$ and $Y$, called $x$ and $y$ respectively. Because $G_{k+1}$ is connected, at least one of $x$ or $y$ must be connected to a vertex other than $y$ or $x$ respectively. Without loss of generality, say it is $x$, which is connected to some not-$y$ vertex $v \in Y$. 

Observe that $|N(v)|=k+1$, and so $|N(N(x))|=k+1$. $x$ must be matched in all SMs in all preference instances. By the contrapositive of Lemma \ref{xunmatched}, $x$ must satisfy $k+1 \geq|N(x)| \geq |N(N(x))| = k+1$, and so $|N(x)|=k+1$, which means $x$ is connected to every vertex in $Y$.

By similar reasoning, $y$ is also connected to every vertex in $X$, and hence we have a complete bipartite graph.

Thus, by induction, the ``only if'' statement holds.
\end{proof}

I now extend Theorem \ref{perfmatch} for the case of all graphs, not just connected ones. Looking at Figure \ref{pcs}b, the two components individually exhibit complete preferences over vertices in the same component. There is a special name for such components: these are called \emph{bicliques} \citep{westgraphtheory}. In a biclique, every vertex is connected to every vertex in the other set (a generalization of cliques to bipartite graphs). A complete bipartite graph is itself a biclique.

\begin{cor}
Given a bipartite graph $G=(X+Y,E)$ with $|X|=|Y|$, all SMs are perfect for all preference instances if and only if every component of $G$ is a biclique.
\end{cor}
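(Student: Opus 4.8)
The plan is to reduce everything to the connected case already settled by Theorem~\ref{perfmatch}. Write $G$ as the disjoint union of its connected components $C_1,\dots,C_m$, and for each $i$ set $X_i = X \cap V(C_i)$ and $Y_i = Y \cap V(C_i)$. Since $G$ has no edges between distinct components, every edge of $G$ — hence every edge a matching can use and every potential blocking pair — lies inside a single component, and the neighbourhood of any vertex is contained in its own component. Two consequences follow, which I would prove first: (i) a preference instance of $G$ is precisely an independent choice of a preference instance on each $C_i$; and (ii) a matching $M$ of $G$ is stable if and only if each restriction $M \cap E(C_i)$ is a stable matching of $C_i$ (a cross-component blocking pair cannot exist, and a vertex's $G$-partner automatically lies in its own component because all its edges do). Using (i), (ii), and the existence of a stable matching in each component (Gale--Shapley) to glue local stable matchings into a global one, I would establish the equivalence: \emph{every SM of $G$ is perfect for every preference instance of $G$} if and only if \emph{for each $i$, every SM of $C_i$ is perfect for every preference instance of $C_i$}.

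Next I would characterise each component on its own. Fix $C_i$, a connected bipartite graph. If $|X_i| \neq |Y_i|$, then the larger side contains a vertex unmatched by \emph{every} matching of $C_i$ — a fortiori by every SM, for every preference instance — so the component fails the property; thus the property forces $|X_i| = |Y_i|$. Given that balance, Theorem~\ref{perfmatch} applies directly: every SM of $C_i$ is perfect for every preference instance if and only if $C_i$ is a complete bipartite graph. Combining with the reduction above yields the corollary in the sharp form: every SM of $G$ is perfect for every preference instance if and only if each component $C_i$ satisfies $|X_i| = |Y_i|$ and is complete bipartite, i.e.\ is a \emph{balanced} biclique. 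I would note that the balance clause is genuinely needed — $K_{2,1} \sqcup K_{1,2}$ has $|X| = |Y| = 3$ with both components bicliques, yet letting the unique $Y$-vertex of the $K_{2,1}$ piece rank its two neighbours forces a non-perfect SM — so ``biclique'' in the statement must be understood as ``biclique with parts of equal size''.

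The only genuinely new content is the component-decomposition step; everything afterward is a black-box use of Theorem~\ref{perfmatch} together with the elementary fact that an unbalanced bipartite graph has no perfect matching. The main obstacle, modest as it is, is making the decomposition airtight: verifying that restricting a stable matching of $G$ to a component gives a stable matching of that component, that independently chosen stable matchings of the components reassemble into a stable matching of $G$ with no cross-component blocking pair, and that global and component-wise preference instances are in free bijective correspondence. Once this bookkeeping is done, the corollary is immediate.
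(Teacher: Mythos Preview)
Your approach is exactly the paper's: reduce to components and invoke Theorem~\ref{perfmatch} on each one. The paper's own proof is a single line (``Follows by applying Theorem~\ref{perfmatch} to each component of the graph''), so your component-decomposition bookkeeping is more explicit than what the paper supplies, but not different in substance.

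Your added observation that each component must be a \emph{balanced} biclique, together with the $K_{2,1} \sqcup K_{1,2}$ counterexample, is a genuine sharpening: the paper's statement, as written, omits the balance condition, and your example shows it is needed. This is a correction to the paper rather than a divergence in proof strategy.
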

\begin{proof}
Follows by applying Theorem \ref{perfmatch} to each component of the graph.
\end{proof}

\subsection{Matching with compatibility constraints}

\citet{maazpapa2020} developed the matching with compatibility constraints problem by studying the Canadian medical residency match and point out that positions are designated as either for English speakers or French speakers; some students, being bilingual, can apply to either. Theorem \ref{main} allows us to easily generalize their model to $n$ compatibility classes.

The vertex set $X$ is divided into $n$ possibly overlapping sets $X=A_1 \cup A_2 \cup A_3 ... \cup A_n$. The set $A_i - \cup_{j \neq i} A_j$ must be nonempty for all $i,j \in [1,n]$, meaning that there must be at least one vertex in each class that is not in any other class\footnotemark. The set $Y$ is partitioned into $n$ disjoint subsets $Y=B_1 \cup B_2 ... \cup B_n$. See Figure \ref{compconstraints} for a schematic. A vertex can not find another vertex acceptable if they do not belong to the same class; otherwise they may, but not necessarily. In the special case of compatibility-wise complete (or CW-complete) preferences that \citet{maazpapa2020} study, every vertex finds every vertex in the same class acceptable.

\footnotetext{This is a generalization of \citet{maazpapa2020}'s restriction that there must be a non-zero amount of students that speak only English and a non-zero amount that speak only French.}

\begin{theorem}
\label{compconstraints}
With $n$ compatibility classes, every SM is $X$-saturating in all instances of CW-complete preferences if and only if $|B_i| \geq |A_i|$ for all $1 \leq i \leq n$.
\end{theorem}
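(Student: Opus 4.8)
The plan is to reduce the statement to Theorem \ref{main}. The key observation is that once the compatibility classes are fixed, a CW-complete preference instance lives on one \emph{fixed} bipartite graph $G$: the vertex $x \in A_i$ is adjacent to exactly the vertices of $B_i$, so $N(x) = \bigcup_{i : x \in A_i} B_i$, while for $y \in B_i$ (recall $Y$ is partitioned) we have $N(y) = A_i$. Moreover, the collection of all CW-complete instances is \emph{exactly} the collection $\mathbb{P}$ of all preference instances on this $G$ — being CW-complete only says that each vertex finds all of its $G$-neighbours acceptable, leaving the orderings free. So Theorem \ref{main} applies verbatim, and it suffices to show: for every $x \in X$, condition (1) or (2) holds if and only if $|B_i| \ge |A_i|$ for all $i$.

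For the forward implication of this reduced claim I would argue the contrapositive. Suppose $|B_i| < |A_i|$ for some $i$, and fix $x^* \in A_i \setminus \bigcup_{j \ne i} A_j$, which is nonempty by hypothesis; then $N(x^*) = B_i$. If $B_i = \emptyset$ then $x^*$ is isolated, hence unmatched in every SM of every CW-complete instance, and we are done; otherwise $N(N(x^*)) = A_i$, so $|N(N(x^*))| = |A_i| > |B_i| = |N(x^*)|$ and condition (1) fails, while every $y \in N(x^*) = B_i$ has $\deg(y) = |A_i| > |B_i| \ge 1$, so condition (2) fails too. Then Lemma \ref{xunmatched} yields a preference instance leaving $x^*$ unmatched in every SM; since that construction only prescribes orderings over the (unchanged) neighbourhoods of $G$, this instance is CW-complete, and the left-hand side of the theorem fails.

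For the reverse implication, suppose $|B_i| \ge |A_i|$ for all $i$, fix $x \in X$, and set $S(x) = \{i : x \in A_i\}$. Each $B_i$ with $i \in S(x)$ is nonempty because $A_i \ne \emptyset$ forces $|B_i| \ge |A_i| \ge 1$, so $N(N(x)) = \bigcup_{i \in S(x)} A_i$; combining disjointness of the $B_i$'s with subadditivity of cardinality gives
\[
|N(x)| = \sum_{i \in S(x)} |B_i| \ \ge\ \sum_{i \in S(x)} |A_i| \ \ge\ \Bigl| \bigcup_{i \in S(x)} A_i \Bigr| = |N(N(x))|,
\]
so condition (1) holds for every $x$, and Theorem \ref{main} gives the result.

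The step I expect to need the most care is the first one: verifying that ``all CW-complete instances'' is the same family as ``all preference instances on the compatibility graph $G$,'' so that not only Theorem \ref{main} but in particular the explicit bad instance built in Lemma \ref{xunmatched} transfers without change — together with dispatching the degenerate possibility of an empty $B_i$ (equivalently, an isolated vertex). Everything after that is bookkeeping with the disjoint union $Y = B_1 \sqcup \dots \sqcup B_n$ and subadditivity of $|\cdot|$.
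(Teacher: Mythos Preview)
Your proof is correct and follows essentially the same route as the paper: both directions reduce to checking conditions (1) and (2) on the compatibility graph and then invoking Lemma~\ref{xmatched} and Lemma~\ref{xunmatched} (equivalently Theorem~\ref{main}). If anything, your version is slightly more careful --- you use subadditivity $|\bigcup A_i| \le \sum |A_i|$ where the paper writes $|N(N(x))| = \sum_{i\in q} |A_i|$ as an equality, and you explicitly dispatch the degenerate $B_i = \emptyset$ case.
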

\begin{proof}
First, the if direction. Take an arbitrary $x \in X$. It belongs to one or more compatibility classes; let this list of classes be stored in the vector $q$. Then $|N(x)|= \sum_{i \in q} |B_i|$. Further, observe that $N(N(x))$ is the set of all vertices in $X$ that also belong to the same compatibility classes, including $x$ itself. Thus, $|N(N(x))| =  \sum_{i \in q} |A_i|$. Because $|B_i| \geq |A_i|$ for all $1 \leq i \leq n$, condition (1) holds for this $x$, and indeed for all $x$. By Lemma \ref{xmatched}, the result follows.

	Next, the ``only if'' direction. Assume for contradiction's sake that there exists a compatibility class $i$ such that $|B_i| < |A_i|$. There exists at least one vertex $x \in X$ that is in the $i$th compatibility class and not in any other class. Then, $|N(x)| = |B_i| < |A_i| = |N(N(x))|$, so it does not fulfill condition (1). And, it does not fulfill condition (2) either because it cannot be connected to a vertex $y$ with degree of 1, as $y$ would violate CW-complete preferences, unless $x$ is the only vertex, but that violates $|B_i| < |A_i|$. By Lemma \ref{xunmatched}, there exists a preference instance with a SM that is not $X$-saturating, which is a contradiction. 
\end{proof}

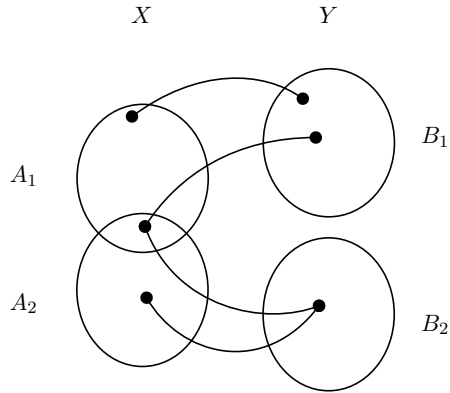
\begin{figure}
\centering
\caption{Matching with compatibility constraints with 2 and 3 classes. Lines indicate compatibility between every vertex in the two subsets touched by the line's endpoints; under CW-complete preferences, lines also indicate acceptability.}
\label{compconstraints}
\begin{subfigure}[b]{\textwidth}
          \centering
          \resizebox{0.5\linewidth}{!}{
          		\tikzset{every picture/.style={line width=0.75pt}} %set default line width to 0.75pt        

\begin{tikzpicture}[x=0.75pt,y=0.75pt,yscale=-1,xscale=1]
%uncomment if require: \path (0,342); %set diagram left start at 0, and has height of 342

%Shape: Ellipse [id:dp9455051011392746] 
\draw   (230.66,81.5) .. controls (253.02,81.54) and (271.12,102.06) .. (271.07,127.33) .. controls (271.02,152.59) and (252.85,173.04) .. (230.48,173) .. controls (208.11,172.96) and (190.02,152.44) .. (190.07,127.17) .. controls (190.12,101.91) and (208.29,81.46) .. (230.66,81.5) -- cycle ;
%Shape: Ellipse [id:dp8907540153897295] 
\draw   (230.53,149) .. controls (252.89,149.04) and (270.99,170.23) .. (270.94,196.33) .. controls (270.89,222.42) and (252.71,243.54) .. (230.34,243.5) .. controls (207.98,243.46) and (189.89,222.27) .. (189.94,196.17) .. controls (189.99,170.08) and (208.16,148.96) .. (230.53,149) -- cycle ;
%Shape: Ellipse [id:dp5614325267743172] 
\draw   (345.66,59.5) .. controls (368.02,59.54) and (386.12,80.06) .. (386.07,105.33) .. controls (386.02,130.59) and (367.85,151.04) .. (345.48,151) .. controls (323.11,150.96) and (305.02,130.44) .. (305.07,105.17) .. controls (305.12,79.91) and (323.29,59.46) .. (345.66,59.5) -- cycle ;
%Shape: Ellipse [id:dp618530555463572] 
\draw   (345.53,164) .. controls (367.89,164.04) and (385.99,185.23) .. (385.94,211.33) .. controls (385.89,237.42) and (367.71,258.54) .. (345.34,258.5) .. controls (322.98,258.46) and (304.89,237.27) .. (304.94,211.17) .. controls (304.99,185.08) and (323.16,163.96) .. (345.53,164) -- cycle ;
%Curve Lines [id:da2910493106816676] 
\draw    (224,89) .. controls (264,59) and (305,60) .. (329.5,78) ;
\draw [shift={(329.5,78)}, rotate = 36.3] [color={rgb, 255:red, 0; green, 0; blue, 0 }  ][fill={rgb, 255:red, 0; green, 0; blue, 0 }  ][line width=0.75]      (0, 0) circle [x radius= 3.35, y radius= 3.35]   ;
\draw [shift={(224,89)}, rotate = 323.13] [color={rgb, 255:red, 0; green, 0; blue, 0 }  ][fill={rgb, 255:red, 0; green, 0; blue, 0 }  ][line width=0.75]      (0, 0) circle [x radius= 3.35, y radius= 3.35]   ;
%Curve Lines [id:da06853493659447984] 
\draw    (232,157) .. controls (252.5,125) and (287.5,102) .. (337.5,102) ;
\draw [shift={(337.5,102)}, rotate = 0] [color={rgb, 255:red, 0; green, 0; blue, 0 }  ][fill={rgb, 255:red, 0; green, 0; blue, 0 }  ][line width=0.75]      (0, 0) circle [x radius= 3.35, y radius= 3.35]   ;
\draw [shift={(232,157)}, rotate = 302.64] [color={rgb, 255:red, 0; green, 0; blue, 0 }  ][fill={rgb, 255:red, 0; green, 0; blue, 0 }  ][line width=0.75]      (0, 0) circle [x radius= 3.35, y radius= 3.35]   ;
%Curve Lines [id:da26743485766430375] 
\draw    (232,157) .. controls (250.5,208) and (306.5,219) .. (339.5,206) ;
\draw [shift={(339.5,206)}, rotate = 338.5] [color={rgb, 255:red, 0; green, 0; blue, 0 }  ][fill={rgb, 255:red, 0; green, 0; blue, 0 }  ][line width=0.75]      (0, 0) circle [x radius= 3.35, y radius= 3.35]   ;
\draw [shift={(232,157)}, rotate = 70.06] [color={rgb, 255:red, 0; green, 0; blue, 0 }  ][fill={rgb, 255:red, 0; green, 0; blue, 0 }  ][line width=0.75]      (0, 0) circle [x radius= 3.35, y radius= 3.35]   ;
%Curve Lines [id:da6169492310176543] 
\draw    (233,201) .. controls (253.5,237) and (305.5,251) .. (339.5,206) ;
\draw [shift={(339.5,206)}, rotate = 307.07] [color={rgb, 255:red, 0; green, 0; blue, 0 }  ][fill={rgb, 255:red, 0; green, 0; blue, 0 }  ][line width=0.75]      (0, 0) circle [x radius= 3.35, y radius= 3.35]   ;
\draw [shift={(233,201)}, rotate = 60.34] [color={rgb, 255:red, 0; green, 0; blue, 0 }  ][fill={rgb, 255:red, 0; green, 0; blue, 0 }  ][line width=0.75]      (0, 0) circle [x radius= 3.35, y radius= 3.35]   ;

% Text Node
\draw (147,118) node [anchor=north west][inner sep=0.75pt]   [align=left] {$\displaystyle A_{1}$};
% Text Node
\draw (147,197) node [anchor=north west][inner sep=0.75pt]   [align=left] {$\displaystyle A_{2}$};
% Text Node
\draw (401,94) node [anchor=north west][inner sep=0.75pt]   [align=left] {$\displaystyle B_{1}$};
% Text Node
\draw (401,210) node [anchor=north west][inner sep=0.75pt]   [align=left] {$\displaystyle B_{2}$};
% Text Node
\draw (222,20) node [anchor=north west][inner sep=0.75pt]   [align=left] {$\displaystyle X$};
% Text Node
\draw (338,20) node [anchor=north west][inner sep=0.75pt]   [align=left] {$\displaystyle Y$};

\end{tikzpicture}
	}
	\caption{2 compatibility classes}
\end{subfigure}
\begin{subfigure}[b]{\textwidth}
	\centering
	\resizebox{0.5\linewidth}{!}{
		\tikzset{every picture/.style={line width=0.75pt}} %set default line width to 0.75pt        

\begin{tikzpicture}[x=0.75pt,y=0.75pt,yscale=-1,xscale=1]
%uncomment if require: \path (0,342); %set diagram left start at 0, and has height of 342

%Shape: Ellipse [id:dp9455051011392746] 
\draw   (319.67,107) .. controls (342.04,107.04) and (360.13,129.46) .. (360.08,157.08) .. controls (360.02,184.69) and (341.85,207.04) .. (319.48,207) .. controls (297.11,206.96) and (279.02,184.54) .. (279.08,156.92) .. controls (279.13,129.31) and (297.3,106.96) .. (319.67,107) -- cycle ;
%Shape: Ellipse [id:dp8907540153897295] 
\draw   (290.81,170.99) .. controls (317.44,171.04) and (338.99,192.23) .. (338.94,218.33) .. controls (338.89,244.42) and (317.26,265.54) .. (290.63,265.49) .. controls (264,265.43) and (242.46,244.24) .. (242.51,218.14) .. controls (242.56,192.05) and (264.18,170.93) .. (290.81,170.99) -- cycle ;
%Shape: Ellipse [id:dp5614325267743172] 
\draw   (319.41,7.5) .. controls (341.09,7.54) and (358.63,26.7) .. (358.58,50.29) .. controls (358.54,73.88) and (340.93,92.97) .. (319.25,92.92) .. controls (297.57,92.88) and (280.03,73.73) .. (280.07,50.14) .. controls (280.12,26.55) and (297.73,7.46) .. (319.41,7.5) -- cycle ;
%Shape: Ellipse [id:dp9970314520144914] 
\draw   (346.06,171.09) .. controls (372.31,171.14) and (393.55,192.34) .. (393.5,218.43) .. controls (393.45,244.53) and (372.13,265.64) .. (345.88,265.59) .. controls (319.62,265.54) and (298.39,244.35) .. (298.44,218.25) .. controls (298.49,192.15) and (319.81,171.04) .. (346.06,171.09) -- cycle ;
%Shape: Ellipse [id:dp09730484726958899] 
\draw   (440.41,240.5) .. controls (462.09,240.54) and (479.63,259.7) .. (479.58,283.29) .. controls (479.54,306.88) and (461.93,325.97) .. (440.25,325.92) .. controls (418.57,325.88) and (401.03,306.73) .. (401.07,283.14) .. controls (401.12,259.55) and (418.73,240.46) .. (440.41,240.5) -- cycle ;
%Shape: Ellipse [id:dp3270169027631349] 
\draw   (201.41,241.5) .. controls (223.09,241.54) and (240.63,260.7) .. (240.58,284.29) .. controls (240.54,307.88) and (222.93,326.97) .. (201.25,326.92) .. controls (179.57,326.88) and (162.03,307.73) .. (162.07,284.14) .. controls (162.12,260.55) and (179.73,241.46) .. (201.41,241.5) -- cycle ;
%Curve Lines [id:da10776674692358945] 
\draw    (322,129) .. controls (354.5,104) and (344.5,72) .. (317.5,57) ;
\draw [shift={(317.5,57)}, rotate = 209.05] [color={rgb, 255:red, 0; green, 0; blue, 0 }  ][fill={rgb, 255:red, 0; green, 0; blue, 0 }  ][line width=0.75]      (0, 0) circle [x radius= 3.35, y radius= 3.35]   ;
\draw [shift={(322,129)}, rotate = 322.43] [color={rgb, 255:red, 0; green, 0; blue, 0 }  ][fill={rgb, 255:red, 0; green, 0; blue, 0 }  ][line width=0.75]      (0, 0) circle [x radius= 3.35, y radius= 3.35]   ;
%Curve Lines [id:da4300095012824605] 
\draw    (344,184) .. controls (376.5,159) and (387.5,67) .. (317.5,57) ;
\draw [shift={(317.5,57)}, rotate = 188.13] [color={rgb, 255:red, 0; green, 0; blue, 0 }  ][fill={rgb, 255:red, 0; green, 0; blue, 0 }  ][line width=0.75]      (0, 0) circle [x radius= 3.35, y radius= 3.35]   ;
\draw [shift={(344,184)}, rotate = 322.43] [color={rgb, 255:red, 0; green, 0; blue, 0 }  ][fill={rgb, 255:red, 0; green, 0; blue, 0 }  ][line width=0.75]      (0, 0) circle [x radius= 3.35, y radius= 3.35]   ;
%Curve Lines [id:da7637426615841207] 
\draw    (344,184) .. controls (376.5,159) and (458.5,183) .. (444.5,261) ;
\draw [shift={(444.5,261)}, rotate = 100.18] [color={rgb, 255:red, 0; green, 0; blue, 0 }  ][fill={rgb, 255:red, 0; green, 0; blue, 0 }  ][line width=0.75]      (0, 0) circle [x radius= 3.35, y radius= 3.35]   ;
\draw [shift={(344,184)}, rotate = 322.43] [color={rgb, 255:red, 0; green, 0; blue, 0 }  ][fill={rgb, 255:red, 0; green, 0; blue, 0 }  ][line width=0.75]      (0, 0) circle [x radius= 3.35, y radius= 3.35]   ;
%Curve Lines [id:da0874176249151386] 
\draw    (358,217) .. controls (375.5,198) and (426.5,196) .. (444.5,261) ;
\draw [shift={(444.5,261)}, rotate = 74.52] [color={rgb, 255:red, 0; green, 0; blue, 0 }  ][fill={rgb, 255:red, 0; green, 0; blue, 0 }  ][line width=0.75]      (0, 0) circle [x radius= 3.35, y radius= 3.35]   ;
\draw [shift={(358,217)}, rotate = 312.65] [color={rgb, 255:red, 0; green, 0; blue, 0 }  ][fill={rgb, 255:red, 0; green, 0; blue, 0 }  ][line width=0.75]      (0, 0) circle [x radius= 3.35, y radius= 3.35]   ;
%Curve Lines [id:da48867022436359164] 
\draw    (320,195) .. controls (335.5,246) and (394.5,270) .. (444.5,261) ;
\draw [shift={(444.5,261)}, rotate = 349.8] [color={rgb, 255:red, 0; green, 0; blue, 0 }  ][fill={rgb, 255:red, 0; green, 0; blue, 0 }  ][line width=0.75]      (0, 0) circle [x radius= 3.35, y radius= 3.35]   ;
\draw [shift={(320,195)}, rotate = 73.09] [color={rgb, 255:red, 0; green, 0; blue, 0 }  ][fill={rgb, 255:red, 0; green, 0; blue, 0 }  ][line width=0.75]      (0, 0) circle [x radius= 3.35, y radius= 3.35]   ;
%Curve Lines [id:da18218968747411024] 
\draw    (319,229) .. controls (334.5,280) and (396.5,290) .. (444.5,261) ;
\draw [shift={(444.5,261)}, rotate = 328.86] [color={rgb, 255:red, 0; green, 0; blue, 0 }  ][fill={rgb, 255:red, 0; green, 0; blue, 0 }  ][line width=0.75]      (0, 0) circle [x radius= 3.35, y radius= 3.35]   ;
\draw [shift={(319,229)}, rotate = 73.09] [color={rgb, 255:red, 0; green, 0; blue, 0 }  ][fill={rgb, 255:red, 0; green, 0; blue, 0 }  ][line width=0.75]      (0, 0) circle [x radius= 3.35, y radius= 3.35]   ;
%Curve Lines [id:da7171848796733766] 
\draw    (319,229) .. controls (306.5,288) and (267.5,299) .. (215.5,290) ;
\draw [shift={(215.5,290)}, rotate = 189.82] [color={rgb, 255:red, 0; green, 0; blue, 0 }  ][fill={rgb, 255:red, 0; green, 0; blue, 0 }  ][line width=0.75]      (0, 0) circle [x radius= 3.35, y radius= 3.35]   ;
\draw [shift={(319,229)}, rotate = 101.96] [color={rgb, 255:red, 0; green, 0; blue, 0 }  ][fill={rgb, 255:red, 0; green, 0; blue, 0 }  ][line width=0.75]      (0, 0) circle [x radius= 3.35, y radius= 3.35]   ;
%Curve Lines [id:da5556520726993677] 
\draw    (320,195) .. controls (248.5,205) and (219.5,242) .. (215.5,290) ;
\draw [shift={(215.5,290)}, rotate = 94.76] [color={rgb, 255:red, 0; green, 0; blue, 0 }  ][fill={rgb, 255:red, 0; green, 0; blue, 0 }  ][line width=0.75]      (0, 0) circle [x radius= 3.35, y radius= 3.35]   ;
\draw [shift={(320,195)}, rotate = 172.04] [color={rgb, 255:red, 0; green, 0; blue, 0 }  ][fill={rgb, 255:red, 0; green, 0; blue, 0 }  ][line width=0.75]      (0, 0) circle [x radius= 3.35, y radius= 3.35]   ;
%Curve Lines [id:da23319925327255153] 
\draw    (274,195) .. controls (230.5,191) and (193.5,243) .. (215.5,290) ;
\draw [shift={(215.5,290)}, rotate = 64.92] [color={rgb, 255:red, 0; green, 0; blue, 0 }  ][fill={rgb, 255:red, 0; green, 0; blue, 0 }  ][line width=0.75]      (0, 0) circle [x radius= 3.35, y radius= 3.35]   ;
\draw [shift={(274,195)}, rotate = 185.25] [color={rgb, 255:red, 0; green, 0; blue, 0 }  ][fill={rgb, 255:red, 0; green, 0; blue, 0 }  ][line width=0.75]      (0, 0) circle [x radius= 3.35, y radius= 3.35]   ;
%Curve Lines [id:da09580696906327613] 
\draw    (296.5,180) .. controls (195.5,174) and (193.5,243) .. (215.5,290) ;
\draw [shift={(215.5,290)}, rotate = 64.92] [color={rgb, 255:red, 0; green, 0; blue, 0 }  ][fill={rgb, 255:red, 0; green, 0; blue, 0 }  ][line width=0.75]      (0, 0) circle [x radius= 3.35, y radius= 3.35]   ;
\draw [shift={(296.5,180)}, rotate = 183.4] [color={rgb, 255:red, 0; green, 0; blue, 0 }  ][fill={rgb, 255:red, 0; green, 0; blue, 0 }  ][line width=0.75]      (0, 0) circle [x radius= 3.35, y radius= 3.35]   ;
%Curve Lines [id:da23351573934529046] 
\draw    (296.5,180) .. controls (236.5,175) and (210.5,87) .. (317.5,57) ;
\draw [shift={(317.5,57)}, rotate = 344.34] [color={rgb, 255:red, 0; green, 0; blue, 0 }  ][fill={rgb, 255:red, 0; green, 0; blue, 0 }  ][line width=0.75]      (0, 0) circle [x radius= 3.35, y radius= 3.35]   ;
\draw [shift={(296.5,180)}, rotate = 184.76] [color={rgb, 255:red, 0; green, 0; blue, 0 }  ][fill={rgb, 255:red, 0; green, 0; blue, 0 }  ][line width=0.75]      (0, 0) circle [x radius= 3.35, y radius= 3.35]   ;
%Curve Lines [id:da559286868165048] 
\draw    (320,195) .. controls (287.5,166) and (276.5,82) .. (317.5,57) ;
\draw [shift={(317.5,57)}, rotate = 328.63] [color={rgb, 255:red, 0; green, 0; blue, 0 }  ][fill={rgb, 255:red, 0; green, 0; blue, 0 }  ][line width=0.75]      (0, 0) circle [x radius= 3.35, y radius= 3.35]   ;
\draw [shift={(320,195)}, rotate = 221.74] [color={rgb, 255:red, 0; green, 0; blue, 0 }  ][fill={rgb, 255:red, 0; green, 0; blue, 0 }  ][line width=0.75]      (0, 0) circle [x radius= 3.35, y radius= 3.35]   ;

% Text Node
\draw (259,116) node [anchor=north west][inner sep=0.75pt]   [align=left] {$\displaystyle A_{1}$};
% Text Node
\draw (259,263) node [anchor=north west][inner sep=0.75pt]   [align=left] {$\displaystyle A_{3}$};
% Text Node
\draw (258,21) node [anchor=north west][inner sep=0.75pt]   [align=left] {$\displaystyle B_{1}$};
% Text Node
\draw (486,300) node [anchor=north west][inner sep=0.75pt]   [align=left] {$\displaystyle B_{2}$};
% Text Node
\draw (138,292) node [anchor=north west][inner sep=0.75pt]   [align=left] {$\displaystyle B_{3}$};
% Text Node
\draw (333,267) node [anchor=north west][inner sep=0.75pt]   [align=left] {$\displaystyle A_{2}$};

\end{tikzpicture}
	}
	\caption{3 compatibility classes}
\end{subfigure}
\end{figure}

\newpage
\bibliography{ref}

\end{document}